\documentclass[10pt,conference,compsocconf,letterpaper]{IEEEtran}

\pagestyle{plain}

\usepackage{color, colortbl}
\usepackage{multirow}
\usepackage{transparent}
\usepackage{balance}
\usepackage{url}
\usepackage{graphicx}
\usepackage{paralist}
\usepackage{amssymb, framed, graphics}
\usepackage{bytefield}
\usepackage{todonotes}
\usepackage{algorithm}
\usepackage{algpseudocode}
\usepackage[group-separator={,}]{siunitx}
\usepackage[underline=true]{pgf-umlsd}
\usepackage{hyphenat}
\usepackage{subfigure}
\usepackage{amsmath}
\usepackage{cite}
\usepackage{fancybox,fancyvrb}
\usepackage{url}
\usepackage{framed}
\usepackage{xspace}
\usepackage{subfigure}


\newtheorem{theorem}{Theorem}

\newcommand{\ccnname}[1]{{\path{#1}}}

\newcommand{\pitless}{\textsc{PIT-less}}

\hyphenation{topo-logy}

\begin{document}
\title{Living in a \pitless\ World: A Case Against Stateful Forwarding in Content-Centric Networking}
\vspace{-1em}
\author{\IEEEauthorblockN{Cesar Ghali \quad\quad\quad Gene Tsudik}
\IEEEauthorblockA{University of California, Irvine\\
\{cghali, gene.tsudik\}@uci.edu}
\and
\IEEEauthorblockN{Ersin Uzun}
\IEEEauthorblockA{Palo Alto Research Center\\
ersin.uzun@parc.com}
\and
\IEEEauthorblockN{Christopher A. Wood}
\IEEEauthorblockA{University of California, Irvine\\
woodc1@uci.edu}}

\maketitle

\begin{abstract}
Information-Centric Networking (ICN) is a recent paradigm that claims
to mitigate some limitations of the current IP-based Internet
architecture. The centerpiece of ICN is named and addressable content,
rather than hosts or interfaces. Content-Centric Networking (CCN)
is a prominent ICN instance that shares the fundamental architectural
design with its equally popular academic sibling Named-Data
Networking (NDN). CCN eschews source addresses and creates
one-time virtual circuits for every content request (called an
interest). As an interest is forwarded it creates state in
intervening routers and the requested content back is delivered
over the reverse path using that state.

Although a stateful forwarding plane might be beneficial in terms
of efficiency, and resilience to certain types of attacks,
this has not been decisively proven via realistic experiments.
Since keeping per-interest state complicates router operations
and makes the infrastructure susceptible to router state
exhaustion attacks (e.g., there is currently no effective defense
against interest flooding attacks), the value of the
stateful forwarding plane in CCN should be re-examined.

In this paper, we explore supposed benefits and various problems
of the stateful forwarding plane. We then argue that its benefits are
uncertain at best and it should not be a mandatory CCN feature.
To this end, we propose a new stateless architecture for CCN
that provides nearly all functionality of the stateful design
without its headaches. We analyze performance and resource
requirements of the proposed architecture, via experiments.
\end{abstract}

\section{Introduction}
%

Information-Centric Networking (ICN) \cite{ahlgren2012survey} is a networking
model that emerged as an alternative to the host-based communication approach of 
the current IP-based Internet
architecture. Content-Centric Networking (CCN) \cite{jacobson2009networking,ccn10}
is one industry-driven instance of this model. (It is closely related to Named-Data Networking (NDN),
which can be viewed as CCN's academic dual.) While IP traffic consists of packets sent
between communicating end-points, CCN traffic is comprised of explicit requests
for, and responses to, named content objects. These requests, called {\em interest} messages,
refer to the desired content by name. An interest is forwarded by routers (using the name) 
towards a content producer until satisfied by the latter or by a cached copy in some router.
The corresponding response, called {\em content}, is forwarded along
the reverse path. To reduce end-to-end latency and congestion, CCN routers may
opportunistically cache content to satisfy future interests.


In CCN, neither interest nor content messages carry source addresses. In order to correctly deliver
content to consumers, routers maintain per-interest state as an entry in a so-called Pending Interest
Table (PIT). This state information maps interest names to interfaces on
which they arrived. This enables a router which receives a content easily identify
the forwarding interface(s) using the corresponding PIT entry. Once a content
is forwarded downstream, the corresponding PIT entry is flushed. 

Another purpose of the PIT is to support {\em interest collapsing} -- a feature
useful for handling multitudes of nearly simultaneous interests for the same content.
Whenever a router receives an interest for which it has a matching PIT entry, the arrival interface of
the new interest is added to the existing entry and the interest is not forwarded
further. This prevents duplicate interests from being sent upstream, thus lowering overall
congestion. However, as we show later in the paper, interest collapsing rarely occurs in practice.

Furthermore, stateful forwarding enabled by PITs is supposed to provide flow balance
via path symmetry between interest and content messages. Consequently,
information from the PIT (e.g., interest to content Round-Trip Time (RTT))
can be used to develop better 
congestion control and traffic shaping mechanisms; see \cite{yi2012adaptive,zhou2015proactive,park2014popularity,wang2013improved}.
However, using a PIT for flow balance and in-network congestion
control is quite problematic in practice. In fact, flow balance is a false claim
in the current CCN design due to the (potentially huge) disparity in sizes between interest and content
messages. Likewise, there is ample evidence that congestion control and transport protocols 
are best deployed at the receiver \cite{carofiglio2013multipath,braun2013empirical,saino2013cctcp},
due to flow imbalance and dynamic routing in CCN. Another claimed advantage of
stateful forwarding is that it can aid routers when responding to network problems since
they can make autonomous and intelligent forwarding decisions for interests.
In practice, however, individual routers rarely have sufficient autonomy to make such decisions.

From the perspective of infrastructure security, the PIT effectively prevents
{\em reflection attacks} since content is always forwarded according
to PIT entries \cite{gasti2013and}. However, such attacks can be mitigated by
mechanisms that do not require any forwarding state. Moreover, this state
is costly to maintain. Various attempts to improve the efficiency of PIT-based
forwarding have been studied in the context of CCN and 
NDN~\cite{so2012toward,tsilopoulos2014reducing,yuan2014scalable}.
However, they do not address the fundamental design issue that the PIT size
grows linearly with the number of distinct interests received by a router. This means
that a PIT is a resource that can be easily abused. In fact, malicious exhaustion of PIT space
in the form of Interest Flooding (IF) attacks \cite{gasti2013and} remains an important 
open problem. In such attacks, adversaries can flood routers with nonsensical (i.e., unsatisfiable)
or slow-to-satisfy (i.e., requiring dynamic content generation) interests in order
to maximize the occupied PIT space. Once a router reaches its maximum PIT capacity it either:
(1) drops new incoming interests, or (2) removes existing entries to free resources
for new incoming interests. Unfortunately, these two options result in effective DoS 
for legitimate future or current interests, respectively.

Given that many of the claimed benefits are dubious and considering 
associated the infrastructure security problems, it becoms hard to justify the need
for PITs in CCN. Therefore, in this paper, we comprehensively
assess (in Section \ref{sec:pit}) the stateful forwarding plane of CCN with respect to
each claimed benefit. We show that such benefits are: (1) either 
unrealistic or infeasible in practice, (2) can be achieved 
by means other than stateful forwarding, or (3) so marginal
that their worth simply does not justify the overhead. We then present,
in Sections \ref{sec:pitless} and \ref{sec:assessment}, a new stateless architecture for
CCN based on Routable Backward Names (RBNs). This new design can co-exist with the
current CCN architecture (with PITs) or replace it entirely. Experimental
results in Section \ref{sec:experiment} indicate that the new design
still retains the essence and performance characteristics
of CCN while successfully avoiding pitfalls of stateful forwarding.
We conclude with a discussion of related work and a summary in
Sections \ref{sec:related} and \ref{sec:conclusion}, respectively.

\section{Assessing the PIT}\label{sec:pit}
The PIT is a fundamental and mandatory feature of the CCN forwarding plane.
It is a tabular data structure that maps interest names and other
metadata to a set of information, such as arrival interfaces and lifetime values.
Arrival interfaces are used to identify downstream interfaces on which content
responses should be forwarded.
The shape and size of this table is directly dependent on the traffic that is processed
by a forwarder. \cite{carofiglio2015pending} studied dynamics
of the PIT and showed that the number of entries can range from less than $100$
for edge routers with a small number of per-namespace flows to over $10^6$
in the core. \cite{yuan2014scalable} designed a PIT implementation that
requires only 37MiB to 245MiB to forward traffic at 100Gbps, which can scale to
fit the needs of realistic traffic, according to \cite{carofiglio2015pending}.

However, in this paper, we do not question the implementaton of the PIT.
Instead, we question its entire existence. Below, we argue that aside from being unecessary to support 
CCN-like communication, the PIT's presence raises more (serious) problems than it solves. 
We support our argument by systematically analyzing the following alleged PIT benefits:
\begin{compactenum}
    \item Content object forwarding
    \item Interest collapsing
    \item Flow and congestion control
    \item Infrastructure security
\end{compactenum}
We then show that all these benefits are either false, unnecessary, or very meager at best.

\subsection{Content Object Forwarding}\label{sec:content-forwarding}
A key tenet of CCN is that content is never sent to a consumer who did not previously 
issue an interest for the (name of that) content. 
Since interests contain no source addresses, PITs
are needed ``[t]o forward Content Objects from producers to consumers along the
interest reverse path by leaving per-hop state in each router...'' \cite{jacobson2009networking}.

We disagree with this statement for two reasons. First, network path symmetry
is not guaranteed and should not be assumed. Wolfgang et al. \cite{john2010estimating} 
demonstrated that route symmetry between the same flow on the Internet is lower in the core
than at the edges. Several tier-1 and tier-2 networks were studied and it was shown 
that, due to ``hot-potato-routing,'' flow asymmetry exceeds $90\%$ in the core. 
Thus, symmetric path routing in the core appears to direct contradict today's practices 
that promote path asymmetry. Attempting to enforce symmetric data traversal appears
to be a challenge from an economic perspective.

Second, pull-based communication with symmetric paths is not well-suited for
\emph{all} applications. While appropriate for scalable content distribution 
applications\footnote{Which some believe to be already well-served by today's CDNs.},
it is substantially different from modern TCP/IP applications and protocols which rely on 
point-to-point bi-directional streams between endpoints. For instance, the WebSocket 
\cite{fette2011websocket} protocol uses full-duplex TCP streams for clients and servers that
engage in real-time, bidirectional communication. It is used by many popular 
interactive applications, such as multimedia chat and multiplayer video games. 
Two-way communication is not limited to Web protocols. Voice applications such
as Skype \cite{skype} and peer-to-peer systems such as BitTorrent \cite{cohen2008bittorrent}
rely on two endpoints which both produce and consume data, as part of the application.

Given the relative infancy of CCN and abundance of real-world applications that currently 
do not fit CCN's mold,  it is difficult to argue that the pull communication model can
satisfy all applications' needs. For example, even some existing CCN applications exploit interest
messages to carry information from consumers to producers \cite{burke2013securing}.
Other applications rely on consumers and producers to send interests to
each other. NDN-RTC, a recently developed NDN video teleconference application, is one
specific example that supports such bidirectional communication between peers \cite{gusev2015ndn}.
(We use NDN and CCN interchangeably here since both are equivalent in this
context.)
Another emerging application design pattern is  data transport via
set synchronization, notably, the NDN ChronoSync protocol \cite{zhu2013let}. 
It enables data synchronization among a set of users. 
Each ChronoSync user acts as \emph{both} a producer and consumer. Consumers
(members) issue long-standing interests to a group (common namespace) about specific data
to be synchronized, that are routed to all members. When target
data is changed by someone, this member satisfies previous interest(s) with a 
fingerprint of the data in a content object. Each member is then responsible for 
requesting updated content to synchronize with the others.

Based on the trends of current TCP/IP applications and proposed design strategies for
CCN-based protocols and applications, it seems clear that bidirectional communication
is here to stay. For it to work, router FIBs need to contain prefixes for all end-points -- 
not just producers. Therefore, all communicating parties need to obtain and 
use a routable prefix, which effectively serves as an address. As a consequence,
forwarding information stored in a PIT becomes redundant and unnecessary.

\subsection{Uitility of Interest Collapsing}
Recall that collapsing applies only to interests arriving at routers during
a very small time-window $\Delta$: between the time of arrival of the original interest
(referring to the same name) that triggered creation of a new PIT entry, and the time of 
arrival of the corresponding content. Due to increasing network data rates and
lower end-to-end delays facilitated by in-network caching, the value of
$\Delta$ is expected to be miniscule, e.g., on order of tens of milliseconds. 
Therefore, we believe that the effect of interest collapsing would not play 
any significant role in the real life performance of CCN.

To support this claim, we model the probability of interest collapsing occurring
in the first-hop router $R$. The reason for choosing the first router is that the
benefits of interest collapsing are felt the most closest to the consumer(s). This is because
collapsing two similar interests at the consumer-facing router reduces bandwidth
usage more than when the same occurs closer to the producer. We assume that
content popularity follows a Zipf distribution with classes $k=1,\dots,K$ and average
number of segments $\sigma_k$.\footnote{Large content is typically split into smaller segments.}
Let each class arrival rate $\lambda_k$ at $R$ be modeled as a Poisson
process. The event of interest collapsing at $R$ for content class $k$ is denoted
as $\textsf{Coll}_{int}^R(k)$. The probability of this event is \cite{carofiglio2011modeling}:
\begin{align}
\label{equ:int_coll_1}
\Pr \left[ \textsf{Coll}_{int}^R(k) \right] =
			\frac{1 - e^{-\Delta \lambda_k}}{1 - (1 - 1/\sigma_k)e^{-\Delta \lambda_k}}
\end{align}

\begin{figure*}[t]
\centering
\subfigure[Cache hit rates for all classes of content is 0 ($R$'s cache is disabled).]
{
	\includegraphics[width=\columnwidth]{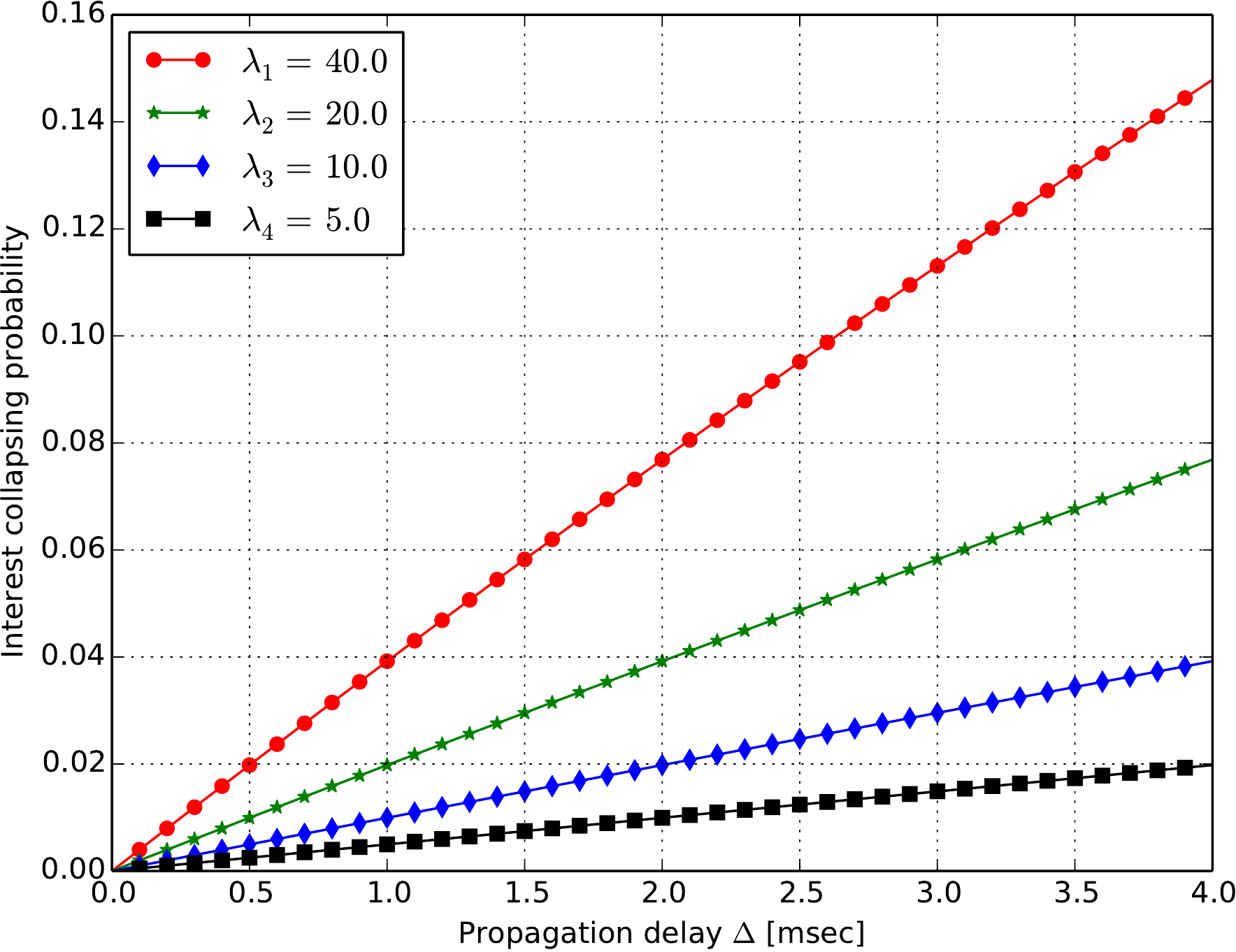}
	\label{fig:int_coll_no_cache}
}
\subfigure[Cache hit rates differ for different classes of content.]
{
	\includegraphics[width=\columnwidth]{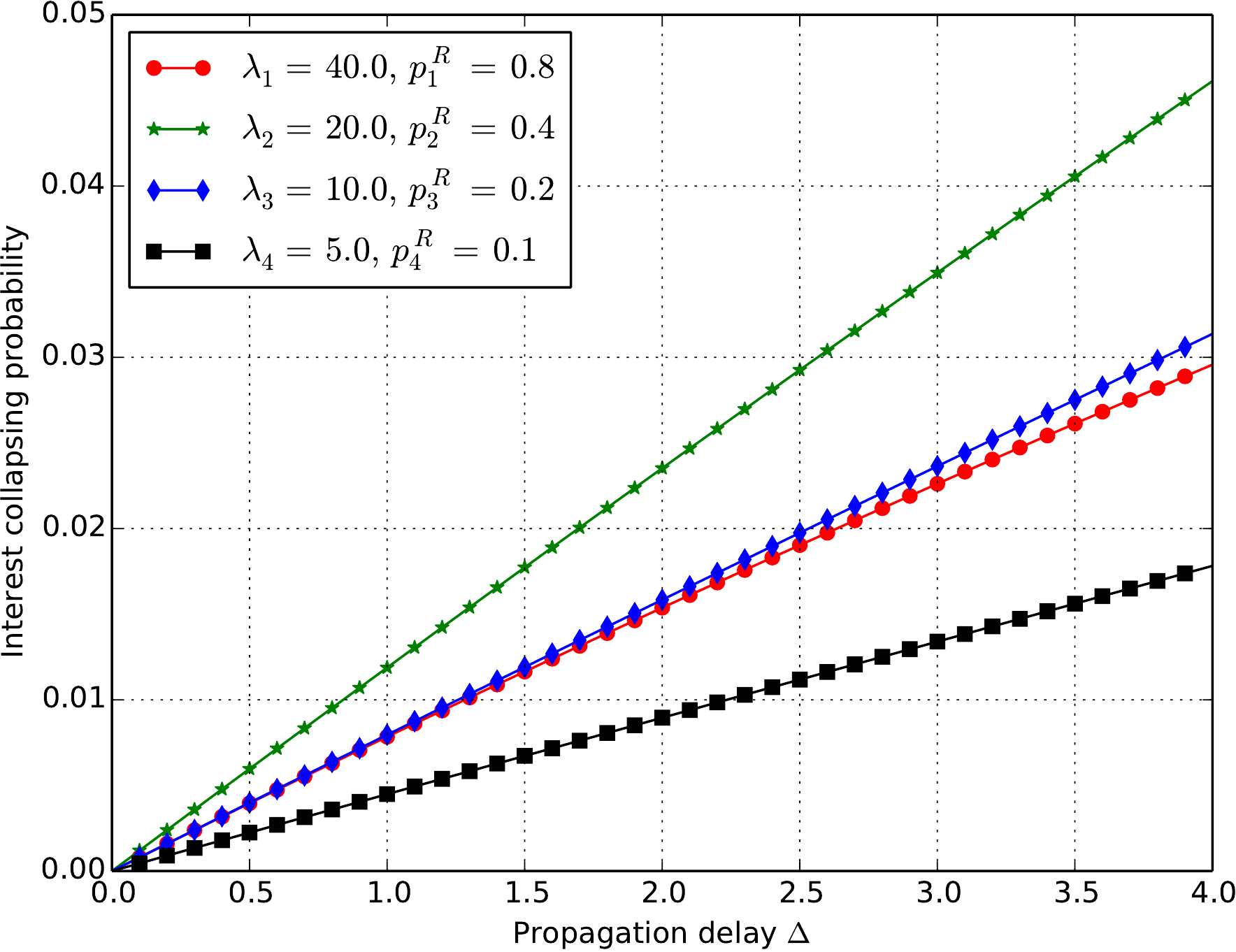}
	\label{fig:int_coll_with_cache}
}
\caption{Interest collapsing probability at $R$}
\label{fig:int_coll}
\end{figure*}

\begin{theorem}
\label{thm:int_coll}
Assuming in-network routing is only enabled at edge routers \cite{garcia2014understanding},
the interest collapsing probability at consumer-facing router $R$ is:
\begin{align}
\label{equ:int_coll_2}
\Pr \left[ \textsf{Coll}_{int}^R(k) \right] =
	\left( 1 - p_k^R \right) \left( 1 - \left( \prod_{i = 1}^{L} e^{- \frac{l_i}{\alpha_i}}
	\right)^{\frac{2 \lambda_k}{c}} \right)
\end{align}
for $L$ links between $R$ and producer $P$, $c=3\times~10^8 m/s$ the speed of light, $l_i$ 
the length of link $i$, constant $\alpha_i$ that depends on the characteristics of the link's 
physical material, and $p_k^R$ the cache hit probability of a class $k$ content at $R$.
\end{theorem}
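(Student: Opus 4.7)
The plan is to start from the general collapsing-probability formula in Equation~(\ref{equ:int_coll_1}) and then (i)~pin down the time window $\Delta$ appropriate for the first-hop router $R$, and (ii)~account for the effect of $R$'s own cache.

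First, I would translate the ``in-network routing only at edge routers'' assumption into a geometric statement: every interest that cannot be satisfied by $R$'s cache is forwarded along a fixed chain of $L$ links toward producer $P$, and the matching content returns over the same chain. Interpreting $\alpha_i c$ as the effective propagation speed on link $i$, the one-way delay on that link is $l_i / (\alpha_i c)$; summing over $i = 1, \dots, L$ and doubling to account for the return path gives the collapsing window $\Delta = (2/c) \sum_{i=1}^{L} l_i / \alpha_i$. A direct substitution then rewrites $e^{-\Delta \lambda_k}$ as $\bigl(\prod_{i=1}^{L} e^{-l_i/\alpha_i}\bigr)^{2\lambda_k/c}$, which is exactly the inner exponential factor in Equation~(\ref{equ:int_coll_2}).

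Second, I would reduce the ratio in Equation~(\ref{equ:int_coll_1}) to the simpler form $1 - e^{-\Delta \lambda_k}$. The cleanest route is to apply Equation~(\ref{equ:int_coll_1}) at the granularity of a single name (one outstanding interest per content item), so that for the collapsing event we actually care about---$R$ merging two near-simultaneous interests for the \emph{same} name---the factor $1 - 1/\sigma_k$ in the denominator vanishes (equivalently, $\sigma_k = 1$ on a per-segment basis), matching the derivation used in \cite{carofiglio2011modeling}. Third, I would condition on whether $R$'s cache already holds a class-$k$ object: with probability $p_k^R$ the interest is satisfied locally and the outbound path, and thus the collapsing window of interest, effectively disappears; with probability $1 - p_k^R$ the interest traverses the full $L$-link path and the reduced expression from the previous two steps applies. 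Multiplying these two factors yields Equation~(\ref{equ:int_coll_2}).

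The main obstacle I anticipate is the second step. The segmentation factor $1 - 1/\sigma_k$ in Equation~(\ref{equ:int_coll_1}) captures the fact that a single multi-segment request generates a burst of correlated interests, and it is not \emph{a priori} obvious that this burst can be ignored when evaluating collapsing at the first hop. Arguing convincingly that segment-level bursts are either absorbed into the per-class arrival rate $\lambda_k$ or are orthogonal to the pairwise-name-collision event we are modeling is the part that requires the most care; once this interpretation is fixed, the remainder of the proof is substitution and conditioning on the cache event.
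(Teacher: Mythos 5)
Your proposal follows essentially the same route as the paper's proof: set $\sigma_k = 1$ to collapse Equation~(\ref{equ:int_coll_1}) to $1 - e^{-\Delta\lambda_k}$, multiply by the cache-miss probability $(1 - p_k^R)$ at $R$, express $\Delta$ as twice the sum of per-link propagation delays $l_i/(\alpha_i c)$ under the edge-only-caching assumption, and substitute to obtain the product form. The only differences are cosmetic (order of steps, and the paper's brief intermediate step truncating the path at the first cache hit $i^*$ before zeroing out intermediate hit probabilities); your flagged worry about the segmentation factor is resolved in the paper exactly as you suggest, by declaring the analysis per-object with $\sigma_k = 1$.
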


\begin{proof}
We focus on modeling interest collapsing for individual content
objects. Thus, we set content size $\sigma_k = 1$ segment. Therefore, Equation
\ref{equ:int_coll_1} can be re-written as follows.
\begin{align*}
\Pr \left[ \textsf{Coll}_{int}^R(k) \right] = 1 - e^{-\Delta \lambda_k}
\end{align*}
However, taking into consideration content caching at $R$, the previous equation
can be further re-written as:
\begin{align}
\label{equ:int_coll_3}
\Pr \left[ \textsf{Coll}_{int}^R(k) \right] = \left( 1 - p_k^R \right) \left( 1 - e^{-\Delta \lambda_k} \right)
\end{align}
where $\left(1-p_k^R \right)$ is the cache miss probability. In other words,
if requested content is cached, $R$ satisfies corresponding interests
without creating PIT entries and interest collapsing does not occur.

We now redefine $\Delta$ as a function of propagation delays on each
link on the path: $R\leftrightarrow~P~\leftrightarrow R$.
$\delta_i$ is the propagation delay of the link between $r_{i - 1}$ and $r_{i}$,
where $r_0=R$ and $r_L=P$. Moreover, $p_k^i$ represents cache hit probability
at $r_i$ and, if an interests generates a cache hit at $r_i$, it is not propagated further.
\begin{align*}
\Delta &= 2 \cdot \sum_{i = 1}^{i^*} \left( \delta_i \left( 1 - p_k^i \right) \right) \\
&= 2 \cdot \sum_{i = 1}^{i^*} \left( \frac{l_i}{\alpha_i c} \left( 1 - p_k^i \right) \right)
\end{align*}
where $\alpha_i c$ represents the propagation speed of link $i$,
and $1 < i^* < L$ is the index of router $r_{i^*}$ where a cache hit first occurs.

However, assuming that in-network caching only happens at the edges and
that $\delta_L$ is negligible relative to $\delta_1 + \delta_2 + \dots + \delta_{L - 1}$
(we ignore the effect of caching at $r_{L-1}$), cache hit probability in all
routers between $R$ and $P$ (not including $R$) is zero, and:
\begin{align*}
\Delta = 2 \cdot \sum_{i = 1}^{L} \frac{l_i}{\alpha_i c}
\end{align*}
Therefore,
\begin{align*}
\Pr \left[ \textsf{Coll}_{int}^R(k) \right] &= \left( 1 - p_k^R \right) \left( 1 - e^{- \left( 2 \cdot
		\sum_{i = 1}^{L} \frac{l_i}{\alpha_i c} \right) \cdot \lambda_k} \right) \nonumber \\
&= \left( 1 - p_k^R \right) \left( 1 - \left( e^{- \sum_{i = 1}^{L}
\frac{l_i}{\alpha_i}} \right)^{\frac{2 \lambda_k}{c}} \right) \nonumber \\
&= \left( 1 - p_k^R \right) \left( 1 - \left( \prod_{i = 1}^{L}
e^{- \frac{l_i}{\alpha_i}} \right)^{\frac{2 \lambda_k}{c}} \right)
\end{align*}
This concludes the proof. 
\end{proof}
We analyze Theorem \ref{thm:int_coll} in the following setup.
For simplicity's sake, we use Equation \ref{equ:int_coll_3} for content arrival
rates and propagation delays between $R$ and $P$. Since content popularity
follows a Zipf distribution, arrival rate for class $k+1$ is half of that
for class $k$, i.e., $\lambda_{k + 1} = \lambda_k / 2$. To illustrate the highest
possible interest collapsing probability, we assume that requested content (even if
popular) is not cached at $R$. Figure \ref{fig:int_coll_no_cache} shows the
collapsing probability of four content classes $k = [1,4]$. The
graph only considers propagation delay up to $4$ milliseconds because, as shown
in \cite{carofiglio2011modeling}, the virtual RTT (VRTT)\footnote{RTT taking into
consideration existence of caches.} for content class $k = 4$ is around $4$
milliseconds. We note that $\Pr~\left[ \textsf{Coll}_{int}^R(k)\right]\leq~0.15$
for the most popular content $(k = 1)$. However, in a more realistic setup where $R$'s
cache is taken into consideration, the highest interest collapsing probability is
$<0.05$ for content class $k=2$; see Figure \ref{fig:int_coll_with_cache}.
Based on such low probabilities, we conclude that interest collapsing is not
crucial for a content distribution network such as CCN.

\subsection{Flow and Congestion Control}\label{sec:flow}
Yi et al. \cite{yi2013case} presented the first thorough argument in support of a stateful forwarding
plane in the context of NDN. Due to their near-identical features, the same applies to CCN. 
The PIT can be used to record RTTs for interest and content exchanges, which, in
turn, is useful for making dynamic forwarding decisions. For instance, if the RTT
for a given namespace on a particular link becomes too high, that link
might be congested and alternatives should be explored. This type of in-network
congestion and flow control has been studied further in \cite{yi2012adaptive,rozhnova2014extended,saino2013cctcp}.
For example, \cite{carofiglio2012joint} propose a joint hop-by-hop (i.e., in-network) and 
receiver-based control protocol that relies on PIT-based RTT measurements for flows.

However, according to \cite{wang2013improved}, flow differentiation is a difficult
challenge. Thus, one approach to ``interest shaping'' is by controlling the
flow of data on upstream and downstream links independently of flows. 
This does not require any information from the PIT. Instead, it relies on knowledge
of average interest and content size, link bandwidths, and interest
arrival rates (or demand). Similar to \cite{saino2013cctcp}, it also relies on
receiver-driven flow control via an Additive-Increase-Multiplicative-Decrease 
window. \cite{ren2015interest} is another example of a receiver-driven
flow control protocol for CCN. Given these results, current trends  
in the ICN research community favor pushing stateful control protocols to
receivers, rather than to network nodes.

\subsection{Infrastructure Security}\label{sec:reflection}
Denial of Service (DoS) attacks are a major threat to 
any network infrastructure. DoS attacks in today's Internet include: bandwidth
depletion, DNS cache poisoning, black-holing and prefix hijacking,
as well as reflection attacks. Gasti et al. \cite{gasti2013and} show how CCN (in
the context of NDN) prevents these types of attacks. Out of all attack types considered,
the PIT is needed only to prevent reflection attacks \cite{syverson1994taxonomy}. 
Since content is forwarded based on PIT entries such attacks are impossible in CCN.
However, forwarding content via the PIT is not the only way to prevent
reflection attacks. If packets have a source address, the ingress
filtering technique in \cite{ferguson2000network} -- whereby ISPs filter
packets based on source addresses --  would work equally well.

Despite its resilience to reflection attacks, CCN is susceptible to another
major attack type known as Interest Flooding (IF) \cite{gasti2013and}.
In one IF attack flavor, a malicious consumer (or a distributed botnet) 
issues nonsensical interests\footnote{For example, an interest with a name reflecting a 
valid producer's prefix, with a random number as its last component.} 
so as to overwhelm targeted routers and saturate their PITs. According to 
\cite{carofiglio2015pending}, the PIT size can exceed $10^6$ as upstream paths become congested.
The problem worsens if a malicious consumer and producer cooperate to target a 
specific router. Although several attempts to detect, mitigate, and prevent
them have been made \cite{virgilio2013pit,compagno2013poseidon,you2014detecting, 
afanasyev2013interest,nguyen2015detection,nguyen2015optimal,tang2013identifying}\footnote{For details, see 
Section \ref{sec:related} below.} each of them is effective
against only a very na\"ive or weak attacker. Thus, IF attacks remain a big open 
problem with no comprehensive solution in sight.

\section{Stateless CCN using Backwards Routable Names} \label{sec:pitless}
Based on our earlier discussion, the price of a PIT comes at the price of 
serious infrastructure security problems that have not been addressed. To 
this end, we introduce a modified stateless CCN architecture, called stateless CCN.

The main idea behind our stateless CCN design is simple: an interest now includes a new field  
called Backwards Routable Name (BRN), a routable prefix, similar to an IP source address.
A BRN indicates \emph{where} the corresponding content should be delivered, akin to an
IP destination address.
The corresponding content carries the BRN as its routable name towards the consumer.
Thus, with properly configured FIB entries, content is correctly delivered to the
requesting consumer.\footnote{This requires consumers to publicly advertise their BRN prefixes
and participate in routing.} This modification to the CCN architecture
is clearly inspired by IP -- all packets (interest and content) are
forwarded based on addresses they carry and not on network state.
However, as we show below, this does not violate CCN's core value of named
data being moved through, and stored in, the network.

To illustrate BRN-based forwarding, consider a scenario where a consumer
$Cr$ with name \ccnname{lci:/edu/uci/ics/bob} ($N_{Cr}$) requests content from a
producer $P$ with the name \ccnname{lci:/bbc/news/today} ($N_{bbc}$).\footnote{Names
are encoded using the Labeled Content Identifier (LCI) schema \cite{mosko2013ccnx}.
LCI names are the concatenation of individual name components, separated by the
\ccnname{`/'} character, in a typical URI-like format.} Let $I\/nt[N, SN]$ be an
interest with the Routable Name $N = N_{bbc}$ and Supporting Name $SN = N_{Cr}$.
Also, let $C[N, SN]$ be the corresponding content object that matches $I\/nt[N,SN]$
where $C.N = I\/nt.N$, and $C.SN = I\/nt.SN$. In this example, assume that
$C[N,SN]$ is not cached anywhere.
\begin{enumerate}
\item $Cr$ advertises its name $N_{Cr}$ and the routing protocol
propagates this information accordingly.
\item $Cr$ issues $I\/nt[N_{bbc}, N_{Cr}]$.
\item\label{stp:net-frwd} The network forwards $I\/nt[N_{bbc}, N_{Cr}]$ to $P$ according
to router FIB entries.
\item Once $P$ receives $I\/nt[N_{bbc}, N_{Cr}]$ it replies with $C[N_{bbc},N_{Cr}]$.
\item Similarly to Step \ref{stp:net-frwd}, the network forwards $C[N_{bbc},N_{Cr}]$
back to $Cr$ using the same interest forwarding strategy.
\end{enumerate}
Several modifications need to be made to the existing CCN architecture
and protocol to enable this communication.
At a minimum, interest and content object messages should carry two names:
one of the requested content and the other of the requesting consumer.
These two names corresponds to source and destination IP addresses in today's
Internet.

We suggest modifying both interest and content headers to include a new 
field called \texttt{SupportingName} (SN). This field contains the BRN 
of the interest-issuing consumer.
In the above example, interest header would contain \ccnname{lci:/cnn/news/today} and
\ccnname{lci:/edu/uci/ics/bob} as $N$ and $SN$, respectively. The replied content header
would contain the same $N$ and $SN$ values. Note that content object signatures
can be generated in advance by omitting the content's $SN$ field since this is only used for
routing purposes. The resulting packet formats are shown in Figure \ref{fig:packets} in ABNF form.

\begin{figure}
\small\bfseries
\begin{Verbatim}[frame=single,fontsize=\scriptsize]
Message := MessageType PacketName [Payload] [Validation]
MessageType := Interest | ContentObject
PacketName := Name SupportingName
Name := CCNx Name
SupportingName := CCNx Name
Payload := OCTET+
Validation := ValidationAlg ValidationPayload
\end{Verbatim}
\caption{Stateless Packet Format in ABNF; {\tt ValidationAlg} and
{\tt ValidationPayload} elements are defined in \cite{messages}.}
\label{fig:packets}
\end{figure}

Currently, interest and content messages are very similar in CCN.
Both contain a Name, Payload, and optional Validation fields \cite{messages}; they
only differ in the top-level type. Our stateless variant still requires
this distinction since interests and content objects are processed differently.
For example, a router first attempts to satisfy an interest from its
cache, while content is (optionally) cached prior forwarding.

We stress that a content might not follow the reverse path of the proceeding 
interest due to routing table configurations. In fact, we anticipate that consumers 
might structure BRNs to control the degree of path asymmetry between 
interest and content messages.

Modified interst and content formats coupled with PIT removal 
simplify router's fast-path processing. Algorithms \ref{alg:int} and \ref{alg:data} 
show how a router would process interest and content messages. {\sf CS-Lookup}
represents a CS lookup operation based on $N$ (content name).
For clarity's sake, we omit content verification details in all algorithms.

\begin{algorithm}[ht!]
\caption{{\sf Process-Interest}} \label{alg:int}
\begin{algorithmic}[1]
\small
\State {\bf Input:} Interest $I\/nt[N,SN]$, arrival interface $F_i$, CS, FIB
\State $C = $ {\sf CS-Lookup}$(\mathrm{CS},N)$
\If {$C \not= \mathrm{nil}$}
    \State {\bf Forward} $C$ to $F_i$
\Else
    \State $F_o = $ FIB.{\sf Lookup}$(N)$
    \State {\bf Forward} $I\/nt[N,SN]$ to $F_o$ based on local strategy
\EndIf
\end{algorithmic}
\end{algorithm}

\begin{algorithm}[ht!]
\caption{{\sf Process-Content-Object}} \label{alg:data}
\begin{algorithmic}[1]
\small
\State {\bf Input:} Content Object $C[N,SN]$, CS, FIB
\State Cache $C[N,SN]$ with $N$ as the key
\State $F_o = $ FIB.{\sf Lookup}$(SN)$
\State {\bf Forward} $C[N,SN]$ to $F_o$ based on local strategy
\end{algorithmic}
\end{algorithm}

\section{Architecture Assessment}\label{sec:assessment}
Despite significant research progress over the past 5 years, the PIT no
longer seems to be a practical solution for content object forwarding in CCN.
As discussed earlier, router PITs are
prone to DoS (specifically, IF) attacks. They also store information
already available from FIBs (consumer routable prefixes) and enforce unnatural
path symmetry in an increasingly asymmetric Internet.
Moreover, flow and congestion control algorithms are being pushed towards receivers
instead of in the network based on state maintained in PITs. 
Our simple stateless CCN variant mitigates these problems by specifying the
use of source and destination prefixes. To support our claims, we
compare the stateful and stateless CCN architectures
with respect to aforementioned features. We then discuss both advantages
and disadvantages of stateless CCN.

\subsection{Revisiting PIT Benefits}\label{sec:pit-benefits}
\noindent
{\bf Reverse-Path Routing.}
Our stateless CCN scheme requires FIBs to be updated to accommodate
RBN prefixes advertised by consumers. It might seem that this would
lead to a tremendous increase in FIB size. However, recall that CCN
interest (and now, content) forwarding is based on LPM. In stateless CCN,
consumers announce their RBNs only to their
first-hop routers (e.g., an access point), which, in turn, combines all
its consumers' RBNs and announces an aggregate prefix to neighboring
routers, similar to the Border Gateway Protocol (BGP)
route-aggregation feature \cite{stewart1998bgp4}.

Also, path asymmetry between interest and content messages
in stateless CCN is more compliant with networking and routing practices of
today's Internet. As argued in Section \ref{sec:content-forwarding}, ISPs are
likely to adopt an architecture that agrees with their present business model.

\noindent
{\bf Forwarding Overhead.}
Stateful CCN dictates that, when processing an interest, a router should,
in the worst case: (1) attempt to satisfy the interest from its cache, (2) create or
modify a PIT entry for the interest, and (3) perform a FIB lookup.
Meanwhile, stateless CCN eliminates (2), which reduces router
operations to cache and FIB lookups.

Removing the PIT also simplifies and improves content forwarding
logic. Instead of indexing the PIT to obtain downstream interfaces, a router performs 
a FIB lookup for each content, just as it does in processing an interest. We claim
that a simple FIB lookup is more efficient than using the PIT for forwarding content.
After a PIT lookup, a PIT entry is flushed once a content object is forwarded. This
"flush" incurs an additional write operation in the routers' fast path. Stateless CCN
replaces this with a simple LPM-based FIB lookup. Note that LPM algorithms have 
been intensively studied, constructed, and fine-tuned to cope with 
multi-gigabit, and even terabit, IP packet
processing \cite{waldvogel2007fast,kobayashi2000longest,dharmapurikar2003longest}.

\noindent
{\bf Flow and Congestion Control.} 
The current receiver driven flow and congestion control algorithms are 
unaffected in our stateless variant. The only difference is that now routers
are unable to compute the RTT for a given interest-content exchange. 
Given that recent algorithms do not rely on these calculations \emph{in
the network} anyway, this is a tolerable loss.

\subsection{Content Caching}\label{sec:caching}
As mentioned earlier, using RBNs for content routing does not preserve
path symmetry. In fact, it encourages path asymmetry. Consequently,
content might be cached along a different path than the interest
originally traversed. It might seem that adjacent (or nearby) consumers
for the same content would therefore not benefit from in-network caching.
We argue that this is not so.

Firstly, CCN content includes a producer-specified cache hint that suggests 
how long routers ought to cache this content. Routers are expected to honor
this hint when managing the data in their caches. Secondly, recall that 
routers can unilaterally decide whether to enable caching for none, some, 
or all content. Thirdly, cache eviction strategies might reduce cache entry lifetime
to much less than the suggested value. For instance, core routers would most
likely not cache content given their high processing rates.
Meanwhile, consumer-facing routers would handle much less traffic
and are thus more likely to cache content for the required amount of time. In fact,
caching has been shown to be most cost effective at
the edges \cite{garcia2014understanding}, e.g., at the tier-3 ISP level.
Since nearby consumers share the same edge router, they will
all benefit from caching popular content in that router.
This observation is supported by the results obtained in
\cite{john2010estimating}, wherein it is shown that path symmetry is highest
at the edges of the network.

\begin{figure}[t]
\centering
\includegraphics[width=0.7\columnwidth]{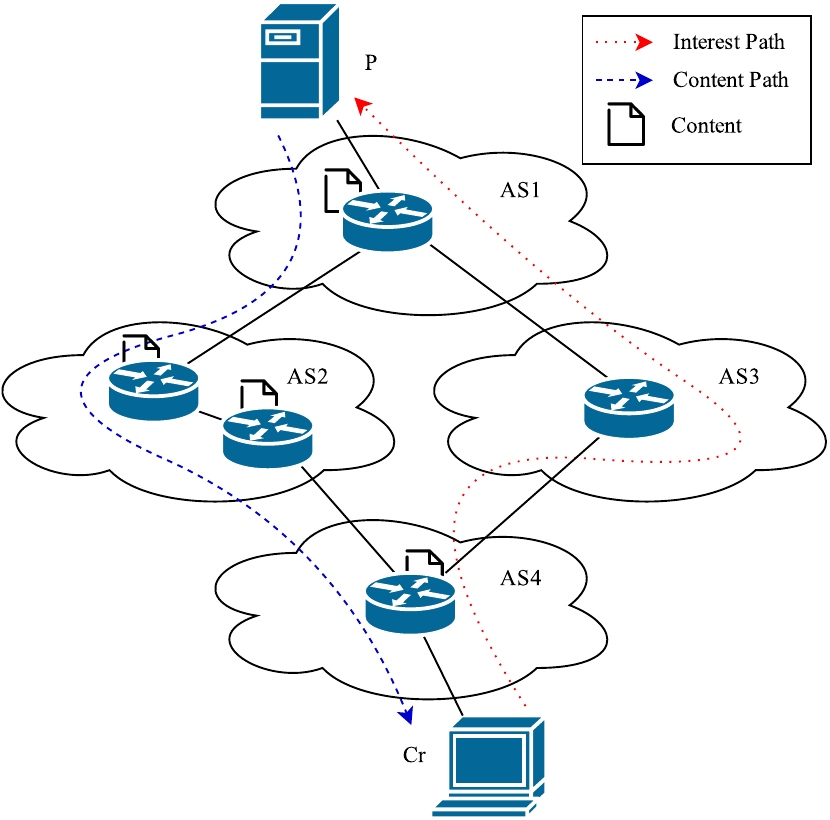}
\caption{Caching in stateless CCN. AS1 and AS4 are stub autonomous system
representing tier-3 ISPs, AS2 and AS3 are transit autonomous system representing
tier-1 ISPs.}
\label{fig:caching}
\end{figure}

Figure \ref{fig:caching} shows an example of caching in stateless CCN. The
topology has 4 autonomous systems (AS-s). AS1 and AS4 are stubs
representing tier-3 ISPs, while AS2 and AS3 are transits representing tier-1
ISPs.\footnote{We ignore tier-2 ISPs for simplicity.}
Interests issued by $Cr$ are forwarded towards $P$
along the dotted (red) path, and content is forwarded back to $Cr$ along the
dashed (blue) path. Assuming that caching only occurs near the edges,
content sent from $P$ to $Cr$ gets cached in AS4. Consequently,
interests for the same content issued by other consumers in AS4 would
be satisfied from cache(s) of AS4.

\begin{figure*}[t]
\begin{center}
\subfigure[DFN topology.]
{
	\includegraphics[width=0.7\columnwidth]{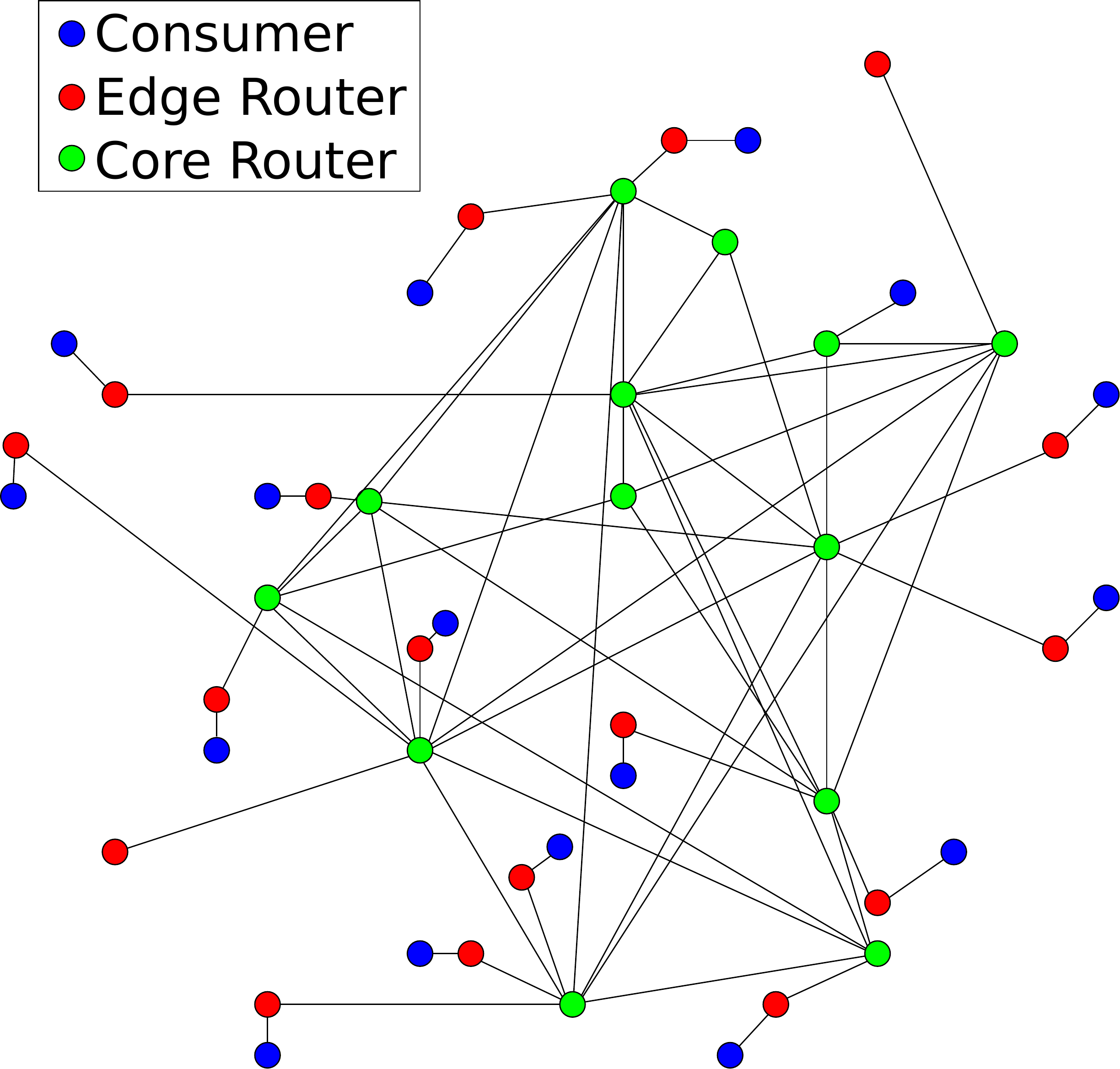}
	\label{fig:dfn-topology}
}
\subfigure[AT\&T topology.]
{
	\includegraphics[width=\columnwidth]{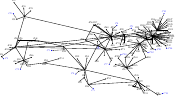}
	\label{fig:att-topology}
}
\caption{The DFN and AT\&T topologies.}
\label{fig:tops}
\end{center}
\end{figure*}

\subsection{Infrastructure Security}\label{sec:infra-sec}
We now discuss both beneficial and problematic infrastructure security 
issues in stateless CCN, such as (D)DoS attacks.

\noindent
{\bf Interest Flooding.} Stateless CCN mitigates
this attack by eliminating its root cause -- the PIT. Without per-request
state in routers, this attack vector is removed. This represents the major 
advantage of stateless CCN.

\noindent
{\bf Reflection Attacks.} Interest and content path symmetry in CCN prevents
reflection attacks. However, in stateless CCN, RBNs serve as a {\em de facto}
source address in interest, and destination in content, messages.
Thus, reflection attacks re-appear. Fortunately, the ingress filtering 
technique described in \cite{ferguson2000network} can be used to
mitigate them.

\noindent
{\bf Cache and Content Poisoning}. Content authentication in stateless CCN is identical
to that in the stateful CCN architecture. It is done by producers
signing content objects or using Self-Cerftifying Names (SCN)
\cite{ghali2014network}. Regardless of the method, all content {\em must} be
verified by consumers. However, verification is not mandatory 
for routers, for several reasons; see \cite{ghali2014network} for more
details. Lack of in-network content verification opens the door for content poisoning
attacks \cite{ghali2014needle}. Moreover, due to possible path asymmetry
in RBN-based content forwarding, content poisoning countermeasures
that work in the current CCN architecture do not apply anymore.

The PIT enables a router to apply the so-called Interest-Key Binding
(IKB) rule \cite{ghali2014network}, whereby consumers and
producers collaborate to provide routers with enough (minimal) trust information to
perform content verification. This information is currently stored in
the PIT. However, as mentioned above, path asymmetry renders
IKB impractical. In stateless CCN, a router might receive (unsolicited) content 
without prior interest traversing the same path. 
If such content is returned on a path different from the
original interest, routers cannot trust any information it carries.

\noindent
{\bf Consumer Privacy}. Lack of source addresses in stateful CCN facilitates
a degree of consumer privacy. RBNs in stateless CCN negate this benefit.
However, given highly descriptive nature of content names, end-to-end encryption
might be needed to achieve better privacy, even in stateful CCN. Moreover, since
many current applications customize content per consumer, both interest and
content messages, if left unencrypted, might include enough information to identify the consumer.
Thus, for sensitive or consumer-specific content, stateless CCN does not necessarily
yield worse privacy due to the use of encryption.

Beyond naming issues, in-network caching can be abused by an adversary to compromise both consumer
and producer privacy \cite{acs2013cache}. By measuring the time required
for content retrieval, an adversary can learn whether specific content was
recently requested by other nearby consumers. This attack is still applicable
in stateless CCN since in-network caching remains a feature. Fortunately,
countermeasures proposed in \cite{acs2013cache,mohaisen2013protecting} are equally
effective in both architectures.

\subsection{Deployment Issues}\label{sec:deployment}
The main purpose of designing a stateless CCN architecture is to provide an alternative
to the current stateful CCN. This does not mean that one must replace the other.
In fact, they can co-exist and allow the consumers to
select one or the other on-demand. Consider the following scenarios:
\begin{compactenum}
\item Stateless CCN: $Cr$ includes an RBN ($SN$)
in an interest and upstream routers forward it as necessary. Stateful
routers create PIT entries and stateless routers do not.
In both cases, the interest is forwarded according to the FIB using
content name $N$. Upon receipt of a content message, a stateful router
uses its PIT to forward the content downstream, while a stateless router does
that using the FIB and $SN$. In this case, stateful forwarders
simply ignore the $SN$ fields in both interests and content objects. This
makes the proposed stateless CCN backwards compatible with
the current CCN architecture.
\item Stateful CCN: $Cr$ issues an interest per current CCN rules.
If a stateless router receives such an interest,
it generates a NACK indicating that the interest cannot be
forwarded further. To handle this NACK, some downstream node must provide a
RBN for the interest and re-forward it as needed.
This node can be an AS gateway (i.e., a router that can forward packets
to and from other ASs) or, worst case, the consumer.
\end{compactenum}
Any node that satisfies an interest must honor its version (stateless or
stateful) when producing a response. For example, if a producer
(or a caching router) receives an interest
with an RBN, it must reply according to stateless CCN by keeping
both $N$ and $SN$ in the corresponding content.

We claim that this type of hybrid approach aligns very well with CCN's edge-caching
strategy \cite{garcia2014understanding} and current path asymmetry in the Internet's core.
Recall that stateless CCN makes it impossible for stateless routers to verify
content they forward. Moreover, if a router cannot verify a content
then it should not cache it. Thus, caching would only occur near the
edges where PITs are located. We envision a network where
all routers within an AS have PITs that are used for \emph{egress interests}, i.e.,
those generated by consumers within the AS. Traditional verification techniques
can be applied at these routers for returned content. If an interest
leaves an AS, the gateway first supplies a RBN before forwarding it
upstream. This interest will not induce any PIT state upstream and therefore
will not result in the corresponding content object being cached outside of the AS.
This is not problematic though since the results of \cite{garcia2014understanding}
imply that caching near the edge (i.e., within the AS above) is most effective.
Moreover, this approach allows path asymmetry outside of the AS, which aligns
with the real-world routing strategies noted in \cite{john2010estimating}.

%
Another side-effect of the hybrid approach is that it can be used as an IF attack recovery
mechanism. If $R$ implements a PIT but does not
have enough resources to create a new entry for $I\/nt$, $R$ can respond with a NACK
similar to what is described above. In this case, if $R$ is under an IF attacks and
its PIT resources is exhausted, neither current nor future interests will be dropped.
The disadvantage of this approach as an effective IF attack countermeasure is that
(1) it is reactive, so it can only be used after the attack occurs, and (2) it
incurs an additional end-to-end latency since consumers (or downstream routers)
need to reissue the interests following stateless CCN guidelines.


\begin{figure*}[t]
\begin{center}
\subfigure[Interest processing overhead in the DFN topology with $160$ consumers.]
{
	\includegraphics[width=\columnwidth]{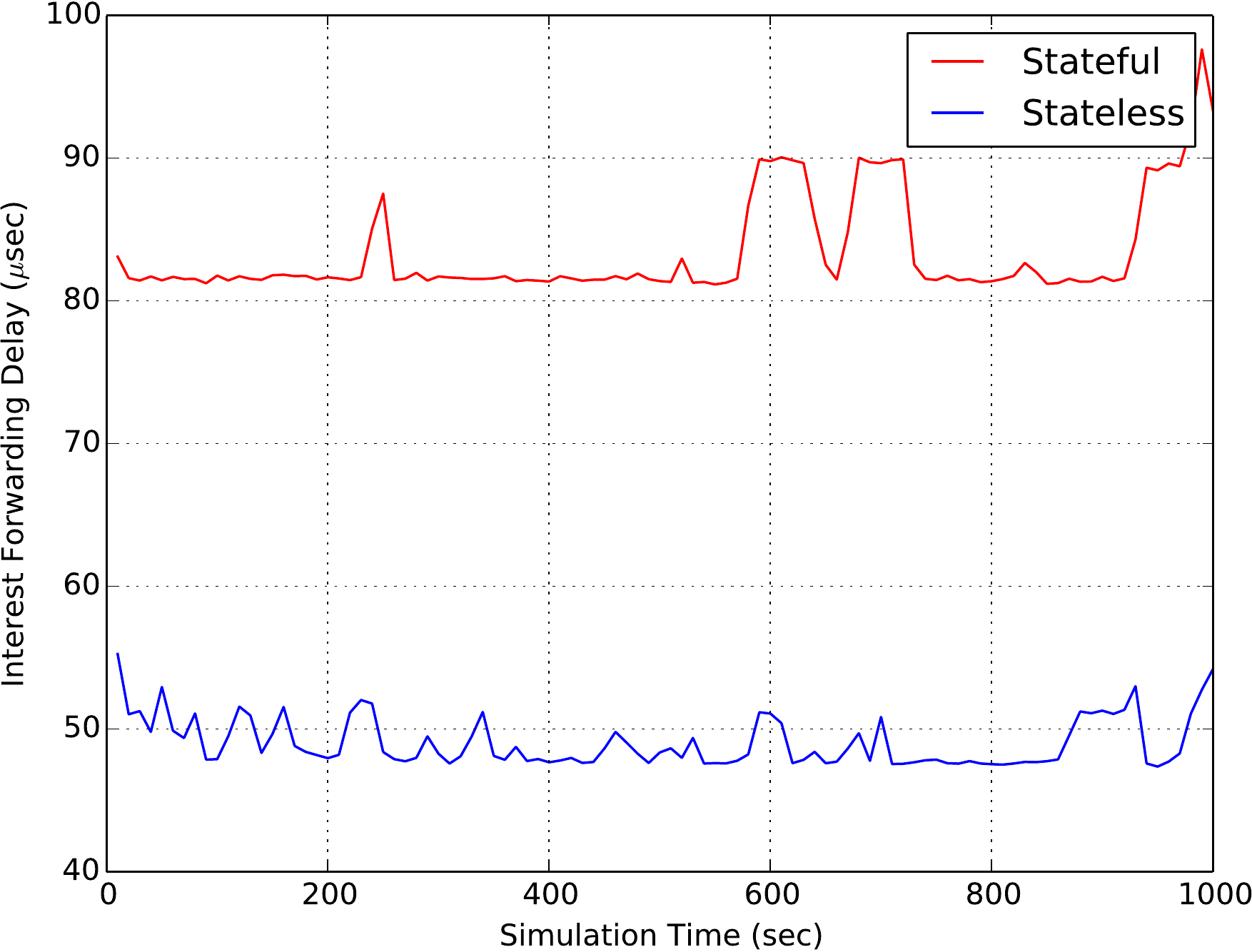}
	\label{fig:att-2}
}
\subfigure[Data processing overhead in the DFN topology with $160$ consumers.]
{
	\includegraphics[width=\columnwidth]{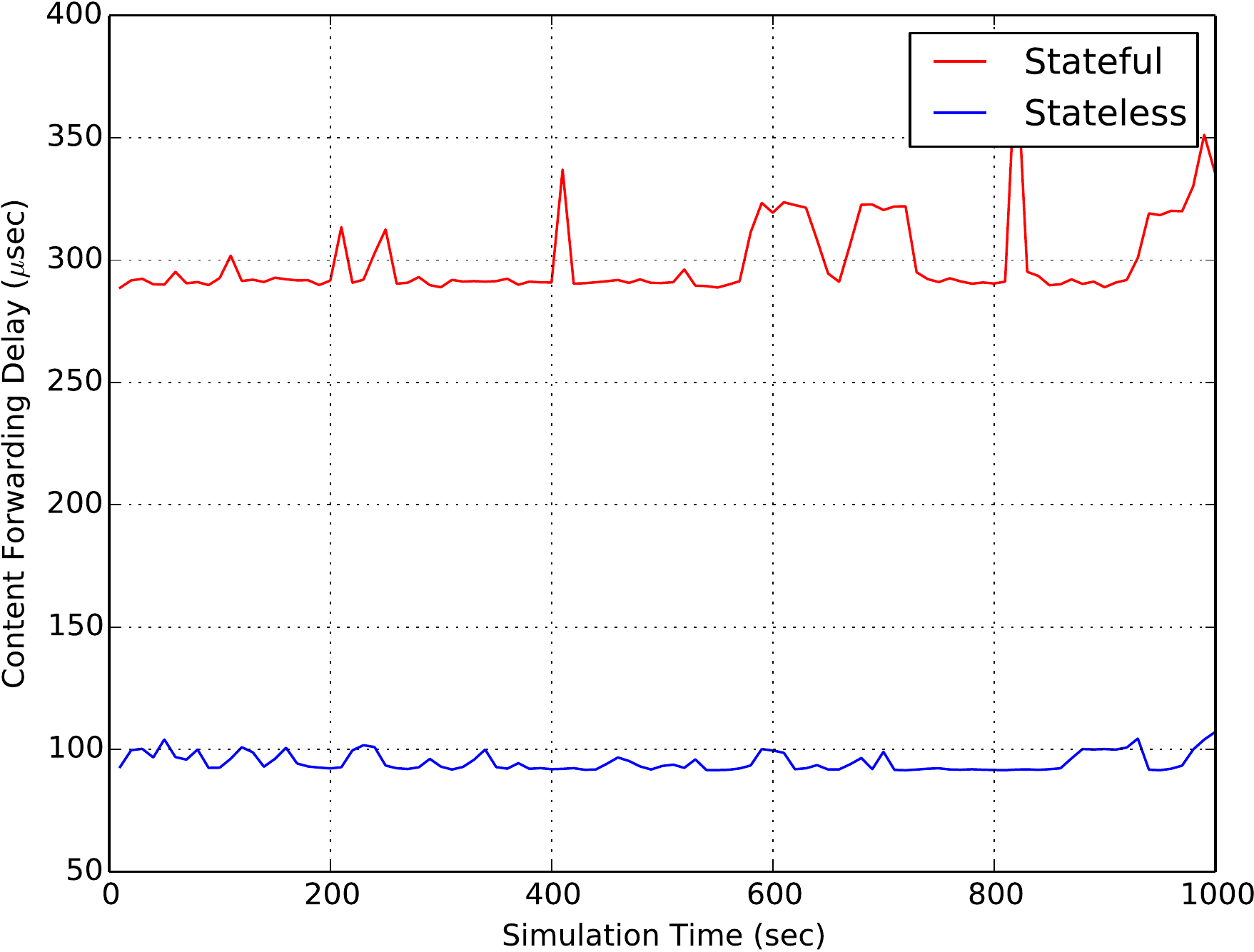}
	\label{fig:att-1}
}
\subfigure[Interest processing overhead in the AT\&T topology with $160$ consumers.]
{
	\includegraphics[width=\columnwidth]{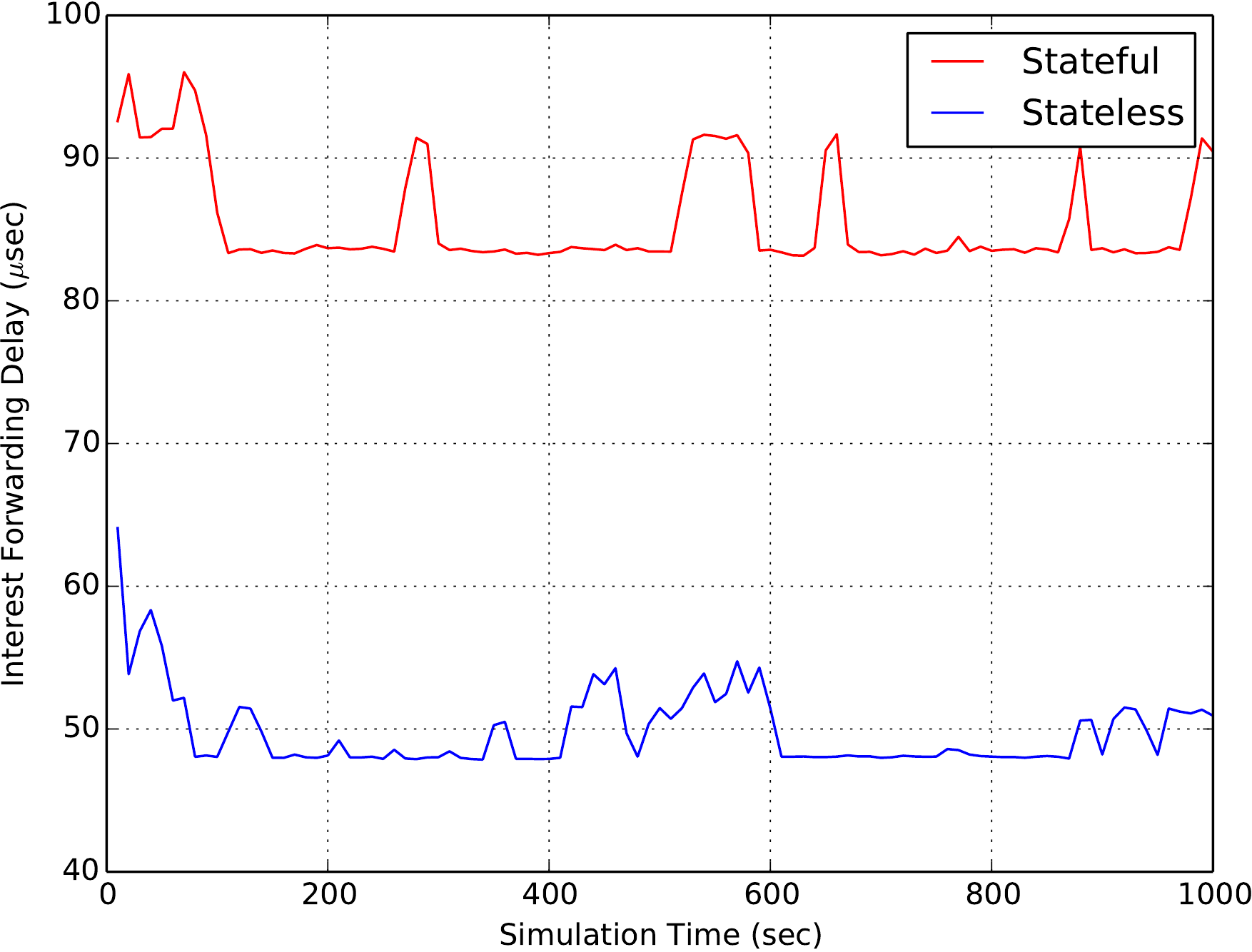}
	\label{fig:dfn-2}
}
\subfigure[Data processing overhead in the AT\&T topology with $160$ consumers.]
{
	\includegraphics[width=\columnwidth]{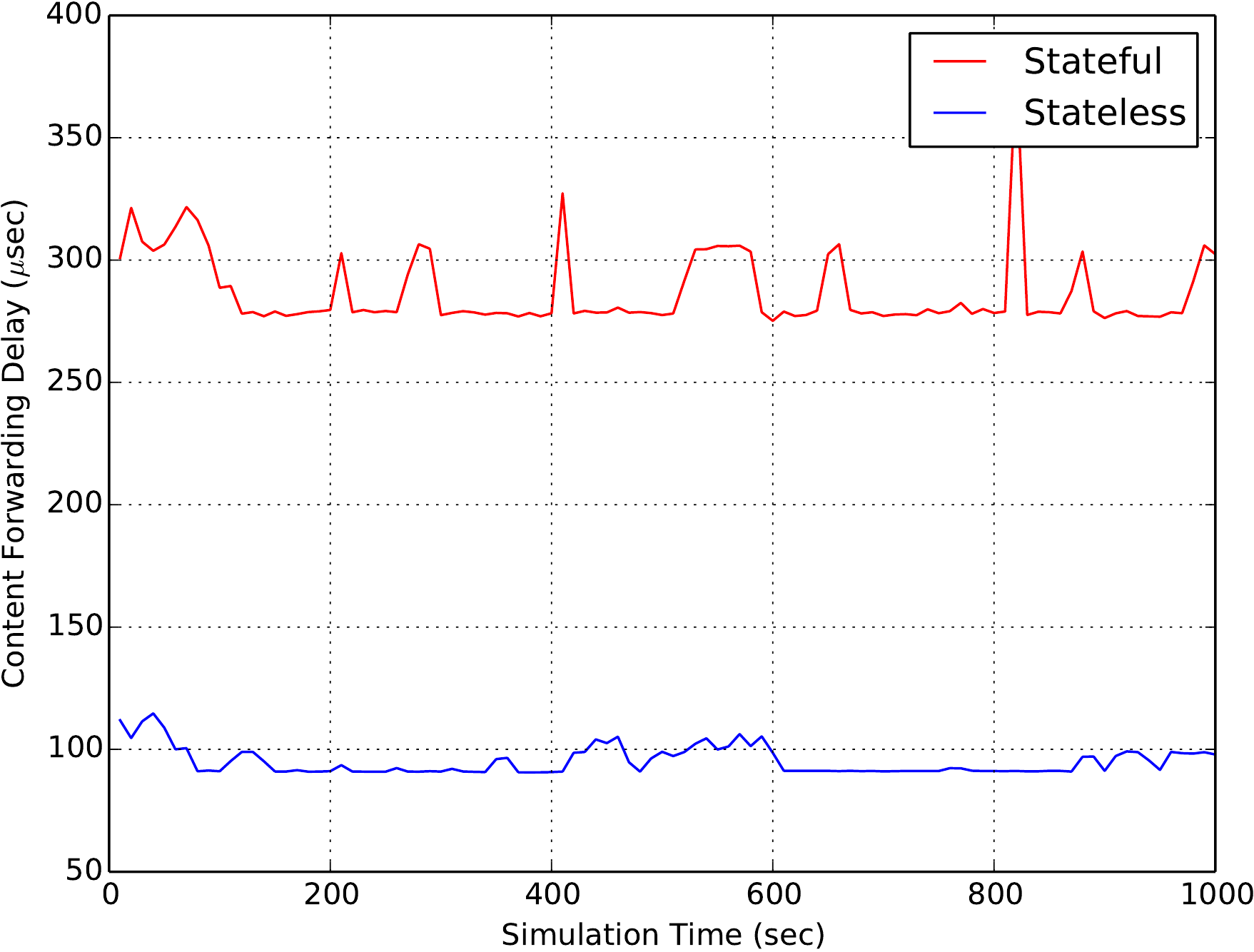}
	\label{fig:dfn-1}
}
\caption{Forwarding overhead in stateful (red) and stateless (blue) CCN variants.}
\label{fig:fwd-results}
\end{center}
\end{figure*}

\begin{figure}[t]
\centering
\includegraphics[width=\columnwidth]{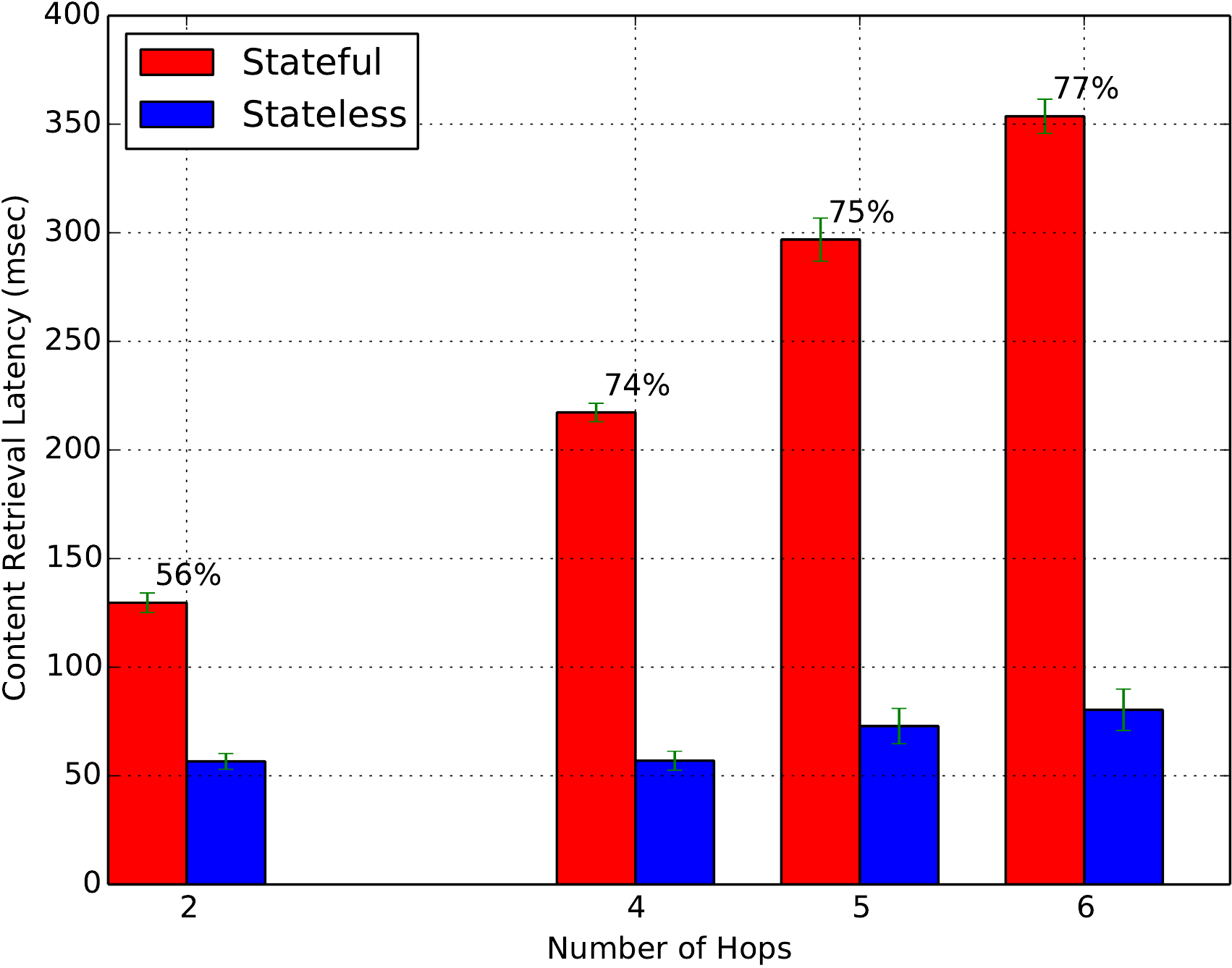}
\caption{Content retrieval latency as a function of number of hops between
consumers and producers for both stateful and stateless forwarders. Note that
paths with 3 hops do not exist in this topology.}
\label{fig:delay-rtt-dfn}
\end{figure}

\section{Experiments and Analysis} \label{sec:experiment}
We now evaluate performance of the stateless CCN in relation
to stateful CCN. Our key metric is the degree to which forwarding overhead
is lowered by stateless routing. To do this, we modified the ndnSIM \cite{ndnSIM} simulator,
a simplified NDN implementation as a NS-3 \cite{ns3} module, to support the stateless
CCN architecture proposed in Section \ref{sec:pitless}. Specifically, we modified the
NDN Forwarding Daemon (NFD) \cite{afanasyev2014nfd} to supporting interest forwarding
based on content names and content forwarding based on RBNs, without leaving PIT states
behind.

We then simulated topologies based on the Deutsches ForschungsNetz (DFN),
the German Research Network \cite{DFNverein, DFN-NOC}, and AT\&T networks (shown
in Figures \ref{fig:dfn-topology} and \ref{fig:att-topology}, respectively).
Each topology consists of $160$ consumers\footnote{Each consumer node in the figures
consists of $10$ actual consumers.}, a single producer connected to one of the edge
routers, and multiple routers (more than $30$).
Each consumer generates $10$ interests per second, with a random suffix so as to
avoid cache hits. This is done to force interest to traverse all the path to the
producer, hence maximize the amount of processing that takes place in the forwarders in
both the upstream and downstream paths. This captures the worst-case scenario.

Our results are shown in Figure \ref{fig:fwd-results}. In both topologies, we
observe approximately $63\%$ improvement in \emph{per packet} forwarding performance
of interest messages. We also observe a $66\%$ improvement in processing content objects.
These cost savings are quite significant, especially, for core routers that might process
packets at rates of $100$Gbps and over.

Furthermore, the overall content retrieval latency is significantly improved
when using stateless forwarders as compared to stateful ones. Figure
\ref{fig:delay-rtt-dfn} shows a comparison of the RTT performance
for both forwarders in the DFN topology. In this experiment, consumers always request
unique content in order to avoid cache hits.\footnote{We do not take caching into
consideration to eliminate any randomize effect it might have on content retrieval
latency. Such randomized effects can be cause by different cache eviction policies.}.
Figure \ref{fig:delay-rtt-dfn} illustrates the average RTT to fetch content
objects as a function of the number of hops between consumers and the producer.
We notice a content retrieval latency improvement of more than $50\%$. This improvement
reaches $77\%$ for paths consisting of $6$ hops.

To summarize, our results show that stateless forwarding leads to better routing 
and content retrieval performance.

\section{Related Work} \label{sec:related}
PIT-focused DoS attacks in CCN are a well-known problem \cite{virgilio2013pit}.
Rate-based \cite{compagno2013poseidon,you2014detecting,afanasyev2013interest}
and statistical-based tests \cite{nguyen2015detection,nguyen2015optimal,tang2013identifying}
have been proposed to detect these attacks and subsequently limit the incoming
interfaces upon which malicious interests arrive. However, this only treats a
symptom of the problem--it does not solve the core issue of PIT state in routers.
Dai et al. \cite{dai2013mitigate} propose a technique called ``interest tracebacks''
to identify malicious attackers and limit the rate at which they
can send messages to the network. The key observation is that PIT state leaves
a trace that terminates at the source of an interest. The network can use this
trail to then identify the attacker. However, this approach depends on localized
attackers sending interests at a high rate; it does not work for highly
distributed adversaries. Similar in-network throttling techniques were discussed in
\cite{compagno2013poseidon} and \cite{afanasyev2013interest}.
Complementary to this general technique, Al-Sheikh et al. \cite{al2015revisiting}
introduce FIB exclude filters that seek to prevent malicious interests from propagating
upstream to locations in the network where the target content cannot possibly be served.
These filters work for static content, only, and cannot be used to prevent
interests for dynamic content from being forwarded.
Li et al. \cite{li2014interest} propose the use of consumer-based puzzles that
must be solved as a native rate-limiting technique. These puzzles, or ``interest
cash,'' are generated by producers to be solved and must be completed \emph{for
each interest}. Although this approach is effective, it severely harms benign consumers.

Techniques to outright replace the PIT have also been proposed. \cite{tsilopoulos2014reducing}
devised a ``semi-stateful'' solution wherein packets are marked (with Bloom
Filters \cite{bloom1970space}) to be forwarded correctly. This approach shifts the state
that was once in the PIT to the packets themselves and creates unnecessary
communication and control overhead in the network.
In a similar vein, Wang et al. \cite{wang2013decoupling} describe a protocol variant wherein
resource-constrained PITs can offload the per-request state \emph{into} interests
that are forwarded. This technique puts PIT state ``on the wire'' and allows
a PIT to naturally decrease in size as content is returned without dropping
interests from benign consumers and routers. This is in contrast to our work
where we put defer state information to the routing protocol.

Salah et al. \cite{salah2015lightweight} used a router coordination
framework called CoMon (Coordination with Lightweight Monitoring) to enable
adjacent nodes to share information about forwarding state and traffic. Select
routers are assigned the role of ``monitor.'' Their goal is to monitor interest
and content exchanges and measure the (un)satisfaction rate. This information is
periodically reported to a central ``domain controller'' that is in charge of
processing the traffic reports to detect and respond to IF attacks. Monitoring
routers are chosen based on their location in the network and closeness to
producers. This solution assumes an unrealistic static topology and centralized
post facto detection mechanism. In summary, this scheme is an extension to previous
rate-based throttling solutions.

Almirshari et al. \cite{almishari2013optimizing} proposed a technique to ``piggyback'' interest
and content objects to enable high throughput bidirectional communication in NDN. Their approach
introduces a new packet type in addition to interests and content objects. It also requires
that interests are unnaturally extended to carry application data in the name. Moreover, their
approach is still susceptible to IF attacks since it requires PIT state for bidirectional communication.
Dai et al. \cite{dai2012pending} study extensions to the PIT to support modern applications such as
streaming services and online gaming. They propose to create long-lived PIT entries to enable
bidirectional communication between clients and servers. This only serves to make adversary's
job easier in an IF attack.

\section{Conclusion and Future Work}\label{sec:conclusion}
Motivated by the Interest Flooding attacks in current CCN, 
we proposed an alternative CCN architecture without PITs, called
stateless CCN. We investigated the benefits of PIT and realized that they do not
significantly improve the performance of content distribution. Our proposed
architecture is based on Routable Backward Names (RBNs) used to route content
back towards requesting consumers. We provided a comprehensive
performance and security assessment of the proposed stateless CCN architecture.
We also discussed how it is practical to deploy this architecture in today's IP
networks and showed that deploying it alongside with current CCN does not achieve
the expected benefits and performance.

However, removing the PIT came at the expense of losing support of CCN features
and extensions developed throughout the last few years. Consumer anonymity, for
instance, cannot be achieved in RBN-based stateless CCN at the network layer
without using supporting protocols such as {\sf AND\=aNA} \cite{dibenedetto2012andana}.
Moreover, the Interest-Key
Binding rule (IKB) \cite{ghali2014network} that allows efficient
content trust at the network layer relies heavily on PIT. Thus, the IKB rule cannot
be applied in RBN-based stateless CCN. We believe that the advantages of the
proposed CCN architecture outweigh its drawbacks. We therefore defer solutions to
the aforementioned disadvantages, e.g., anonymity and trust, to future work.

\balance

\bibliographystyle{IEEEtran}
\bibliography{IEEEabrv,references}

\begin{thebibliography}{10}
\providecommand{\url}[1]{#1}
\csname url@samestyle\endcsname
\providecommand{\newblock}{\relax}
\providecommand{\bibinfo}[2]{#2}
\providecommand{\BIBentrySTDinterwordspacing}{\spaceskip=0pt\relax}
\providecommand{\BIBentryALTinterwordstretchfactor}{4}
\providecommand{\BIBentryALTinterwordspacing}{\spaceskip=\fontdimen2\font plus
\BIBentryALTinterwordstretchfactor\fontdimen3\font minus
  \fontdimen4\font\relax}
\providecommand{\BIBforeignlanguage}[2]{{%
\expandafter\ifx\csname l@#1\endcsname\relax
\typeout{** WARNING: IEEEtran.bst: No hyphenation pattern has been}%
\typeout{** loaded for the language `#1'. Using the pattern for}%
\typeout{** the default language instead.}%
\else
\language=\csname l@#1\endcsname
\fi
#2}}
\providecommand{\BIBdecl}{\relax}
\BIBdecl

\bibitem{ahlgren2012survey}
B.~Ahlgren \emph{et~al.}, ``A survey of information-centric networking,''
  \emph{IEEE Communications}, vol.~50, no.~7, 2012.

\bibitem{jacobson2009networking}
V.~Jacobson \emph{et~al.}, ``Networking named content,'' in \emph{CoNEXT},
  2009.

\bibitem{ccn10}
I.~Solis, ``{CCN} 1.0 (tutorial),'' in \emph{ICN}, 2014.

\bibitem{yi2012adaptive}
C.~Yi \emph{et~al.}, ``Adaptive forwarding in named data networking,''
  \emph{ACM CCR}, vol.~42, no.~3, 2012.

\bibitem{zhou2015proactive}
J.~Zhou \emph{et~al.}, ``A proactive transport mechanism with explicit
  congestion notification for {NDN},'' in \emph{ICC}, 2015.

\bibitem{park2014popularity}
H.~Park \emph{et~al.}, ``Popularity-based congestion control in named data
  networking,'' in \emph{ICUFN}, 2014.

\bibitem{wang2013improved}
Y.~Wang \emph{et~al.}, ``An improved hop-by-hop interest shaper for congestion
  control in named data networking,'' \emph{ACM CCR}, vol.~43, no.~4, 2013.

\bibitem{carofiglio2013multipath}
G.~Carofiglio \emph{et~al.}, ``Multipath congestion control in content-centric
  networks,'' in \emph{INFOCOM WKSHPS}, 2013.

\bibitem{braun2013empirical}
S.~Braun \emph{et~al.}, ``An empirical study of receiver-based aimd
  flow-control strategies for {CCN},'' in \emph{ICCCN}, 2013.

\bibitem{saino2013cctcp}
L.~Saino \emph{et~al.}, ``{CCTCP}: A scalable receiver-driven congestion
  control protocol for content centric networking,'' in \emph{ICC}, 2013.

\bibitem{gasti2013and}
P.~Gasti \emph{et~al.}, ``{DoS} and {DDoS} in named data networking,'' in
  \emph{ICCCN}, 2013.

\bibitem{so2012toward}
W.~So \emph{et~al.}, ``Toward fast {NDN} software forwarding lookup engine
  based on hash tables,'' in \emph{ANCS}, 2012.

\bibitem{tsilopoulos2014reducing}
C.~Tsilopoulos \emph{et~al.}, ``Reducing forwarding state in content-centric
  networks with semi-stateless forwarding,'' in \emph{INFOCOM}, 2014.

\bibitem{yuan2014scalable}
H.~Yuan \emph{et~al.}, ``Scalable pending interest table design: From
  principles to practice,'' in \emph{INFOCOM}, 2014.

\bibitem{carofiglio2015pending}
G.~Carofiglio \emph{et~al.}, ``Pending interest table sizing in named data
  networking,'' in \emph{ICN}, 2015.

\bibitem{john2010estimating}
W.~John \emph{et~al.}, ``Estimating routing symmetry on single links by passive
  flow measurements,'' in \emph{IWCMC}, 2010.

\bibitem{fette2011websocket}
I.~Fette \emph{et~al.}, ``{RFC} 6455: The websocket protocol,'' 2011.

\bibitem{skype}
``Skype,'' \url{http://www.skype.com/}.

\bibitem{cohen2008bittorrent}
B.~Cohen, ``The {BitTorrent} protocol specification,'' 2008.

\bibitem{burke2013securing}
J.~Burke \emph{et~al.}, ``Securing instrumented environments over
  content-centric networking: the case of lighting control and {NDN},'' in
  \emph{INFOCOM WKSHPS}, 2013.

\bibitem{gusev2015ndn}
P.~Gusev \emph{et~al.}, ``{NDN-RTC}: Real-time videoconferencing over named
  data networking,'' in \emph{ICN}, 2015.

\bibitem{zhu2013let}
Z.~Zhu \emph{et~al.}, ``Let's chronosync: Decentralized dataset state
  synchronization in named data networking,'' in \emph{ICNP}, 2013.

\bibitem{carofiglio2011modeling}
G.~Carofiglio \emph{et~al.}, ``Modeling data transfer in content-centric
  networking,'' in \emph{ITC}, 2011.

\bibitem{garcia2014understanding}
J.~Garcia-Luna-Aceves \emph{et~al.}, ``Understanding optimal caching and
  opportunistic caching at the edge of information-centric networks,'' in
  \emph{ICN}, 2014.

\bibitem{yi2013case}
C.~Yi \emph{et~al.}, ``A case for stateful forwarding plane,'' \emph{Computer
  Communications}, vol.~36, no.~7, 2013.

\bibitem{rozhnova2014extended}
N.~Rozhnova \emph{et~al.}, ``An extended hop-by-hop interest shaping mechanism
  for content-centric networking,'' in \emph{GLOBECOM}, 2014.

\bibitem{carofiglio2012joint}
G.~Carofiglio \emph{et~al.}, ``Joint hop-by-hop and receiver-driven interest
  control protocol for content-centric networks,'' in \emph{SIGCOMM ICN
  Workshop}, 2012.

\bibitem{ren2015interest}
Y.~Ren \emph{et~al.}, ``An interest control protocol for named data networking
  based on explicit feedback,'' in \emph{ANCS}, 2015.

\bibitem{syverson1994taxonomy}
P.~Syverson, ``A taxonomy of replay attacks [cryptographic protocols],'' in
  \emph{CSFW}, 1994.

\bibitem{ferguson2000network}
P.~Ferguson, ``{RFC} 2827: Network ingress filtering: Defeating denial of
  service attacks which employ {IP} source address spoofing,'' 2000.

\bibitem{virgilio2013pit}
M.~Virgilio \emph{et~al.}, ``{PIT} overload analysis in content centric
  networks,'' in \emph{SIGCOMM ICN Workshop}, 2013.

\bibitem{compagno2013poseidon}
A.~Compagno \emph{et~al.}, ``Poseidon: Mitigating interest flooding {DDoS}
  attacks in named data networking,'' in \emph{LCN}, 2013.

\bibitem{you2014detecting}
R.~You \emph{et~al.}, ``Detecting and mitigating interest flooding attack in
  content centric networking,'' \emph{Advances in Computer Science and
  Technology}, vol.~65, 2014.

\bibitem{afanasyev2013interest}
A.~Afanasyev \emph{et~al.}, ``Interest flooding attack and countermeasures in
  named data networking,'' in \emph{IFIP Networking}, 2013.

\bibitem{nguyen2015detection}
N.~T. Nguyen \emph{et~al.}, ``Detection of interest flooding attacks in named
  data networking using hypothesis testing,'' 2015.

\bibitem{nguyen2015optimal}
T.~Nguyen \emph{et~al.}, ``An optimal statistical test for robust detection
  against interest flooding attacks in {CCN},'' in \emph{IFIP/IEEE IM}, 2015.

\bibitem{tang2013identifying}
J.~Tang \emph{et~al.}, ``Identifying interest flooding in named data
  networking,'' in \emph{GreenCom}, 2013.

\bibitem{mosko2013ccnx}
M.~Mosko, ``{CCNx} 1.0 protocol specification roadmap,'' 2013.

\bibitem{messages}
M.~Mosko \emph{et~al.}, ``{CCNx} messages in tlv format,'' 2015,
  \url{https://tools.ietf.org/html/draft-irtf-icnrg-ccnxmessages-00}.

\bibitem{stewart1998bgp4}
J.~W. Stewart~III, \emph{{BGP4}: inter-domain routing in the {Internet}}.\hskip
  1em plus 0.5em minus 0.4em\relax Addison-Wesley Longman, 1998.

\bibitem{waldvogel2007fast}
M.~Waldvogel, ``Fast longest prefix matching: algorithms, analysis, and
  applications,'' Ph.D. dissertation, ETH Z{\"u}rich, 2007.

\bibitem{kobayashi2000longest}
M.~Kobayashi \emph{et~al.}, ``A longest prefix match search engine for
  multi-gigabit ip processing,'' in \emph{ICC}, 2000.

\bibitem{dharmapurikar2003longest}
S.~Dharmapurikar \emph{et~al.}, ``Longest prefix matching using bloom
  filters,'' in \emph{SIGCOMM}, 2003.

\bibitem{ghali2014network}
C.~Ghali \emph{et~al.}, ``Network-layer trust in named-data networking,''
  \emph{SIGCOMM CCR}, vol.~44, no.~5, 2014.

\bibitem{ghali2014needle}
C.~Ghali \emph{et~al.}, ``Needle in a haystack: Mitigating content poisoning in
  named-data networking,'' in \emph{NDSS SENT Workshop}, 2014.

\bibitem{acs2013cache}
G.~Acs \emph{et~al.}, ``Cache privacy in named-data networking,'' in
  \emph{ICDCS}, 2013.

\bibitem{mohaisen2013protecting}
A.~Mohaisen \emph{et~al.}, ``Protecting access privacy of cached contents in
  information centric networks,'' in \emph{ASIA CCS}, 2013.

\bibitem{ndnSIM}
S.~Mastorakis \emph{et~al.}, ``{ndnSIM 2.0}: A new version of the {NDN}
  simulator for {NS-3},'' Technical Report, 2015.

\bibitem{ns3}
``Network simulator 3 {(NS-3)},'' \url{http://www.nsnam.org/}.

\bibitem{afanasyev2014nfd}
A.~Afanasyev \emph{et~al.}, ``{NFD} developers guide,'' Technical Report
  {NDN}-0021, {NDN} Project, Tech. Rep., 2014.

\bibitem{DFNverein}
``{DFN-Verein},'' \url{http://www.dfn.de/}.

\bibitem{DFN-NOC}
``{DFN-Verein: DFN-NOC},''
  \url{http://www.dfn.de/dienstleistungen/dfninternet/noc/}.

\bibitem{dai2013mitigate}
H.~Dai \emph{et~al.}, ``Mitigate ddos attacks in {NDN} by interest traceback,''
  in \emph{INFOCOM WKSHPS}, 2013.

\bibitem{al2015revisiting}
S.~Al-Sheikh \emph{et~al.}, ``Revisiting countermeasures against {NDN} interest
  flooding,'' in \emph{ICN}, 2015.

\bibitem{li2014interest}
Z.~Li \emph{et~al.}, ``Interest cash: an application-based countermeasure
  against interest flooding for dynamic content in named data networking,'' in
  \emph{CFI}, 2014.

\bibitem{bloom1970space}
B.~H. Bloom, ``Space/time trade-offs in hash coding with allowable errors,''
  \emph{ACM Communications}, vol.~13, no.~7, pp. 422--426, 1970.

\bibitem{wang2013decoupling}
K.~Wang \emph{et~al.}, ``Decoupling malicious interests from pending interest
  table to mitigate interest flooding attacks,'' in \emph{GC Wkshps}, 2013.

\bibitem{salah2015lightweight}
H.~Salah \emph{et~al.}, ``Lightweight coordinated defence against interest
  flooding attacks in {NDN},'' in \emph{INFOCOM WKSHPS}, 2015.

\bibitem{almishari2013optimizing}
M.~Almishari \emph{et~al.}, ``Optimizing bi-directional low-latency
  communication in named data networking,'' \emph{SIGCOMM CCR}, vol.~44, no.~1,
  2013.

\bibitem{dai2012pending}
H.~Dai \emph{et~al.}, ``On pending interest table in named data networking,''
  in \emph{ANCS}, 2012.

\bibitem{dibenedetto2012andana}
S.~DiBenedetto \emph{et~al.}, ``{ANDaNA}: Anonymous named data networking
  application,'' in \emph{NDSS}, 2012.

\end{thebibliography}

\end{document}